%%%%%%%%%%%%%%%%%%%%%%% file template.tex %%%%%%%%%%%%%%%%%%%%%%%%%
%
% This is a general template file for the LaTeX package SVJour3
% for Springer journals.          Springer Heidelberg 2010/09/16
%
% Copy it to a new file with a new name and use it as the basis
% for your article. Delete % signs as needed.
%
% This template includes a few options for different layouts and
% content for various journals. Please consult a previous issue of
% your journal as needed.
%
%%%%%%%%%%%%%%%%%%%%%%%%%%%%%%%%%%%%%%%%%%%%%%%%%%%%%%%%%%%%%%%%%%%
%
% First comes an example EPS file -- just ignore it and
% proceed on the \documentclass line
% your LaTeX will extract the file if required
%\begin{filecontents*}{example.eps}
%%!PS-Adobe-3.0 EPSF-3.0
%%%BoundingBox: 19 19 221 221
%%%CreationDate: Mon Sep 29 1997
%%%Creator: programmed by hand (JK)
%%%EndComments
%gsave
%newpath
%  20 20 moveto
%  20 220 lineto
%  220 220 lineto
%  220 20 lineto
%closepath
%2 setlinewidth
%gsave
%  .4 setgray fill
%grestore
%stroke
%grestore
%\end{filecontents*}
%
\RequirePackage{fix-cm}
\documentclass[smallextended]{svjour3}       % onecolumn (second format)
\smartqed  % flush right qed marks, e.g. at end of proof
\usepackage{graphicx}
%
% \usepackage{mathptmx}      % use Times fonts if available on your TeX system
%
% insert here the call for the packages your document requires
%\usepackage{latexsym}
% etc.
\usepackage{amsfonts}
\usepackage{amsmath}
\usepackage{amssymb}
\usepackage{mathtools}
\usepackage{hyperref}
\usepackage{braket}

% please place your own definitions here and don't use \def but
% \newcommand{}{}

\newcommand{\tr}{\text{tr}}

% Insert the name of "your journal" with
% \journalname{myjournal}
%
\begin{document}

\title{Number of quantum measurement outcomes as a resource%\thanks{Grants or other notes
%about the article that should go on the front page should be
%placed here. General acknowledgments should be placed at the end of the article.}
}
%\subtitle{Do you have a subtitle?\\ If so, write it here}

%\titlerunning{Short form of title}        % if too long for running head

\author{Weixu Shi         \and
	Chaojing Tang %etc.
}

%\authorrunning{Short form of author list} % if too long for running head

\institute{Weixu Shi \and Chaojing Tang \at
	Department of Electronic Science, National University of Defense Technology, Deyaroad 109, 410073 Changsha, China \\
	\email{wx.shi@nudt.edu.cn}            %  \\
%             \emph{Present address:} of F. Author  %  if needed
}

\date{Received: date / Accepted: date}
% The correct dates will be entered by the editor

\maketitle
\begin{abstract}
	Recently there have been fruitful results on resource theories of quantum measurements. Here we investigate the number of measurement outcomes as a kind of resource. We cast the robustness of the resource as a semi-definite positive program. Its dual problem confirms that if a measurement cannot be simulated by a set of smaller number of outcomes, there exists a state discrimination task where it can outperforms the whole latter set. An upper bound of this advantage that can be saturated under certain condition is derived. We also show that the possible tasks to reveal the advantage are not restricted to state discrimination and can be more general.
	\keywords{Quantum information \and Quantum measurements \and Resource theory}
	% \PACS{PACS code1 \and PACS code2 \and more}
	% \subclass{MSC code1 \and MSC code2 \and more}
\end{abstract}

\section{Introduction}
\label{intro}
Quantum measurements are of central interest in quantum information theory. They collapse the quantum state to the classical world which we can perceive, thus allowing us to probe the quantum system. One fundamental property of a measurement is the number of outcomes, or more specifically, the effective one. We focus on the effective number of outcomes because one may simulate a measurement from ones with less outcomes by mixing, relabeling and post-selecting \cite{oszmaniec_simulating_2017}. However, such classical processing like these does not increase the effective number of outcomes.  In the following, when we speak of the number of outcomes of a measurement, we refer to the effective one.

One can observe the difference between the effective number of outcomes and the apparent one at the level of behaviors. It has been proved that quantum theory cannot be explained by n-chotomic theories, where all measurements are constructed from n-outcome ones \cite{kleinmann_quantum_2016,kleinmann_proposed_2017,hu_observation_2018}. Specifically, the importance of the number of outcomes is well demonstrated by observing  the existence of quantum correlations which are not in the nonsignaling set produced by measurements with smaller number of outcomes \cite{kleinmann_quantum_2016}. Besides, it is shown that three-outcome measurements can reveal more nonclassicality in some quantum systems than binary ones \cite{nguyen_quantum_2020}. Other than device-independent  semi-device-independent certification, based on the assumption of upper bounded distinguishability, some three-outcome measurements can be certified from the those of binary measurements \cite{shi_semi-device-independent_2019}. 

Since it enables a larger set of quantum behavior, a measurement with more outcomes can be advantageous in quantum information processing tasks. It is natural to see this property as a kind of resource. Resource theory provides a framework to quantify the resource, look into the operational meaning and explore the possibility of application. In resource theory, one can define a set of free states and operations that cannot increase the resource are free operations.  Robustness, as a measure of how ``far''  is a state away from the convex set of free states, has been proved to be an appropriate quantifier of the outperformance. The recent years have seen fruitful results on quantum resources \cite{chitambar_quantum_2019}, among which to our most interest are resource theories regarding quantum measurements. Incompatibility of measurements was first focused on and was associated operationally with state discrimination tasks \cite{uola_quantifying_2019,skrzypczyk_all_2019,buscemi_complete_2020,carmeli_quantum_2019}. More generally, for every resourceful measurement, there exists a state discrimination tasks in which it outperforms all the non-resourceful ones \cite{uola_quantifying_2019,oszmaniec_operational_2019,takagi_general_2019}. 

Here we investigate the number of measurement outcomes as a resource. We show how to compute the robustness of a given measurement via semi-definite positive programming (SDP). The dual problem confirms the advantage of larger number of outcomes in state discrimination tasks. Next we give an upper bound of the advantage, which can be saturated under certain condition. Lastly, we prove that a more general kind of prepare-and-measure experiment can reveal the advantage of resourceful measurements.

\section{Advantage quantified by robustness}
\label{sec:robustness}
First we introduce the resource of number of outcomes.
The free operations are defined as all the operations that cannot increase the effective number of outcomes, including mixing and relabeling. Let $P_m$ be set of all $m$-outcome positive operator valued measurements (POVMs). The free set $F^n_m$, as the set of all the $m$-outcome POVMs that can be simulated by $n$-outcome ones, is a convex and closed subset of $P_m$.  The POVM $\mathcal{O}$ in $F_m^n$ can be written as
$O_b=\sum_{a,x}p(x)p(b|a,x)Q_{a|x}$, where $\{Q_{a|x}\}$ is an ensemble of $n$-outcome measurements, with $x$ labeling the measurements and $a$ the outcomes, and $p(b|a,x)$ denotes the probabilistic strategy to mix the measurements and relabel the outcomes.

Without loss of generality, we can restrict $p(b|a,x)$ to deterministic function. Rewrite $x$ as the combinations of $m$ outcomes taken $n$ at a time, namely, $x=x_1 x_2 \dots x_n$, with $x_i=1,\dots,m$ and $x_i<x_j$ if $i<j$. Then the deterministic relabeling is $p(b|a,x)=D(b|a,x)=\delta_{b,x_a}$.

The robustness of a POVM, $\mathcal{M}=\{M_b\}$, with respect to $F^n_m$ (in the following we use $F$ for simplicity), is defined as 
\begin{align}\label{eq:def_rob}
	R_{F}(\mathcal{M})=\{t\ge0\left|\frac{M_b+tN_b}{1+t}=O_b\in F\right.\},
\end{align}
where $\mathcal{N}=\{N_b\}\in P_m$. It characterizes the relative distance from the measurement to the surface of the free set. Another interpretation of it is the mininum ``noise'' that can make the measurement not resourceful (fall into the free set). The robustness can be cast as an SDP:
\begin{align}
	\min \quad & 1+R_F(\mathcal{M})=\frac{1}{d}\sum_{a,x,b} D(b|a,x)\tr(\tilde{Q}_{a|x}) \nonumber\\
	\text{s.t.} \quad & \tilde{Q}_{a|x}\ge 0 \quad\forall a,x,\nonumber\\
	& \sum_{a,x}D(b|a,x)\tilde{Q}_{a|x}-M_b\ge 0 \quad\forall b, \nonumber\\
	& \sum_a \tilde{Q}_{a|x}-\frac{1}{d}\sum_a \tr(\tilde{Q}_{a|x})\mathbb{I}=0 \quad\forall x,
\end{align}
where $\tilde{Q}_{a|x} = (1+t)p(x)Q_{a|x}$ and $N_b$ have been substituted.
The dual program reads
\begin{align}
	\max \quad & \sum_b \tr(M_b Y_b) \nonumber\\
	\text{s.t.} \quad & Z_x-\frac{1}{d}\tr(Z_x)\mathbb{I}+\sum_b D(b|a,x)Y_b\leq \frac{1}{d}\sum_b D(b|a,x)\mathbb{I} \quad \forall a,x, \nonumber\\
	&Z_x \text{ is Hermitian} \quad \forall x, \nonumber\\
	&Y_b \geq 0 \quad \forall b.
\end{align}
Strong duality holds because $\tilde{Q}_{a|x}=\mathbb{I}$ is a strictly feasible point for the primal problem. 

By choosing the state ensemble as $\mathcal{E}=\{Y_b/\tr(\sum_b Y_b)\}$ and combining the positivity inequality derived from Eq.~\ref{eq:def_rob}, one can see that robustness implicates the maximal advantage of the measurement in state discrimination tasks \cite{uola_quantifying_2019,oszmaniec_operational_2019}, namely,
\begin{align} \label{eq:max_adv}
	\max_\mathcal{E} \frac{P_\text{guess}(\mathcal{E},\mathcal{M})}
	{\max_{O_x\in F_n} P_\text{guess}(\mathcal{E},\mathcal{O})}=1+R_F(\mathcal{M}),
\end{align}
among which $P_\text{guess}(\mathcal{E},\mathcal{N})=\sum_b\tr(\tilde{\rho}_b N_b)$ is the probability of guessing correctly in a state discrimination game of instance $\mathcal{E}=\{\tilde{\rho}_b\}$, which has absorbed the prior probabilities into the density operators.
This proves that for a resourceful $m$-outcome measurement, there exists a state discrimination task in which the measurement outperforms all the non-resourceful ones. 

Note that Eq.~\ref{eq:max_adv} also provides a semi-device-independent certification of the number of outcomes larger than $n$. On the perfect knowledge of the state preparation, one can compute the denominator of Eq.~\ref{eq:max_adv}. If the guessing probability in the experiment goes beyond that, the existence of measurements with more than $n$ outcomes can be concluded.

\section{Maximal advantage}
Last section has shown that every non $n$-outcome simulable measurement can show advantage over all the simulable ones in at least one state discrimination task. One may next wonder what is its greatest advantage.
Via see-saw method on the dual problem, one can have an empirical result as well as the corresponding quantum realization. Here we give an analytical upper bound of the maximal advantage.
\begin{proposition}\label{thm:existpmtask}
	The maximal advantage of $P_m$ over $F^m_n$ is upper bounded as
	\begin{align}
		\max_{\mathcal{M}\in P_m} 1+R_F(\mathcal{M})\leq \frac{m}{n}.
	\end{align}
	The inequality is saturated when $d\ge m$.
\end{proposition}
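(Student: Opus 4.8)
The plan is to read the bound through the ratio characterization of the robustness in Eq.~\ref{eq:max_adv}, so that proving the proposition reduces to showing that, for \emph{every} ensemble $\mathcal{E}=\{\tilde\rho_b\}$ and every $\mathcal{M}=\{M_b\}\in P_m$, some $n$-outcome-simulable measurement recovers at least the fraction $n/m$ of the guessing probability of $\mathcal{M}$. First I would construct such a measurement explicitly by randomizing uniformly over all $\binom{m}{n}$ size-$n$ subsets $x=x_1\dots x_n$, i.e.\ setting $p(x)=1/\binom{m}{n}$, and for each $x$ defining the $n$-outcome POVM
\begin{align}
	Q_{a|x}=M_{x_a}+\frac{1}{n}\sum_{c\notin x}M_c, \qquad a=1,\dots,n.
\end{align}
A one-line check gives $\sum_{a=1}^{n}Q_{a|x}=\sum_{b=1}^{m}M_b=\Id$ with each $Q_{a|x}\ge0$, so this is a genuine POVM, and with the deterministic relabeling $D(b|a,x)=\delta_{b,x_a}$ the resulting $\mathcal{O}=\{O_b\}$ lies in $F$ by construction.

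The second step is a counting estimate. Because $Q_{a|x}\ge M_{x_a}$ and $\tilde\rho_{x_a}\ge0$, I would use $\tr(\tilde\rho_{x_a}Q_{a|x})\ge\tr(\tilde\rho_{x_a}M_{x_a})$ to obtain
\begin{align}
	P_\text{guess}(\mathcal{E},\mathcal{O})
	=\frac{1}{\binom{m}{n}}\sum_{x}\sum_{a}\tr(\tilde\rho_{x_a}Q_{a|x})
	\ge\frac{1}{\binom{m}{n}}\sum_{x}\sum_{a}\tr(\tilde\rho_{x_a}M_{x_a}).
\end{align}
Each label $b$ occurs in exactly $\binom{m-1}{n-1}$ of the subsets, so the double sum equals $\binom{m-1}{n-1}\sum_b\tr(\tilde\rho_b M_b)$; invoking $\binom{m-1}{n-1}/\binom{m}{n}=n/m$ then yields $P_\text{guess}(\mathcal{E},\mathcal{O})\ge\tfrac{n}{m}P_\text{guess}(\mathcal{E},\mathcal{M})$. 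Feeding this into Eq.~\ref{eq:max_adv} bounds the ratio by $m/n$ for every $\mathcal{M}$, which is the claimed inequality.

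For saturation when $d\ge m$ I would exhibit a witnessing instance. Pick $m$ orthonormal vectors $\ket{1},\dots,\ket{m}$ (which exist since $d\ge m$), take the uniform ensemble $\tilde\rho_b=\tfrac{1}{m}\ketbra{b}{b}$ and a sharp measurement with $M_b=\ketbra{b}{b}$ distinguishing them perfectly, so $P_\text{guess}(\mathcal{E},\mathcal{M})=1$. It then remains to show that no free measurement beats $n/m$: in any simulation outcome $a$ of component $x$ reports label $x_a$, and since the $\ket{x_a}$ are orthonormal, $\sum_a\bra{x_a}Q_{a|x}\ket{x_a}\le\sum_a\bra{x_a}\Id\ket{x_a}=n$, whence $P_\text{guess}(\mathcal{E},\mathcal{O})\le n/m$; the value $n/m$ is attained, e.g.\ by keeping $n-1$ of the basis outcomes and merging the remaining ones into the last outcome. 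Hence the ratio equals $m/n$ and the bound is tight.

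The main obstacle is the first step, namely producing a bona fide $n$-outcome-simulable POVM that still captures a controllable fraction of $\mathcal{M}$'s performance. The delicate point is the completion term $\tfrac{1}{n}\sum_{c\notin x}M_c$: it has to be chosen so that $\sum_a Q_{a|x}=\Id$ while leaving the operator inequality $Q_{a|x}\ge M_{x_a}$ intact, because it is precisely this inequality, combined with the combinatorial identity $\binom{m-1}{n-1}/\binom{m}{n}=n/m$, that produces the exact factor $n/m$. Once this construction is in place, the remaining estimates are routine.
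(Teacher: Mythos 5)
Your proposal is correct, and it follows the same overall strategy as the paper --- uniform randomization over all $\binom{m}{n}$ size-$n$ subsets together with the combinatorial identity $\binom{m-1}{n-1}/\binom{m}{n}=n/m$ --- but it realizes the key step differently. The paper bounds the denominator of Eq.~\ref{eq:max_adv} abstractly: it decomposes the free-measurement score as $\sum_x p(x)q_x P_\text{guess}(\hat{\mathcal{E}}_x)$, takes the \emph{optimal} discrimination on each sub-ensemble, and then invokes the claim that pre-measurement information (being told the subset $x$) cannot decrease the guessing probability over the whole ensemble, i.e.\ $\sum_x q'_x P_\text{guess}(\hat{\mathcal{E}}_x)\ge \max_\mathcal{M}P_\text{guess}(\mathcal{E},\mathcal{M})$. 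You instead exhibit an explicit free measurement built directly from $\mathcal{M}$, with $Q_{a|x}=M_{x_a}+\frac{1}{n}\sum_{c\notin x}M_c$, and get the factor $n/m$ from the operator inequality $Q_{a|x}\ge M_{x_a}$ plus counting how often each label appears among the subsets. This is more elementary and arguably tighter as a piece of writing: it sidesteps the pre-measurement-information monotonicity claim, whose justification in the paper requires (but does not spell out) that the $q'_x$-weighted marginal of the sub-ensembles reproduces $\mathcal{E}$ --- exactly the bookkeeping your counting step does explicitly. Your saturation argument is also more complete than the paper's: where the paper merely asserts the form of the optimal $\mathcal{O}$ for $m$ uniform orthogonal states, you prove the matching upper bound $P_\text{guess}(\mathcal{E},\mathcal{O})\le n/m$ for \emph{every} free measurement (via $\bra{x_a}Q_{a|x}\ket{x_a}\le 1$ summed over the $n$ outcomes of each component) and then show attainability by the truncated projective measurements, which together with the numerator value $1$ yields the ratio $m/n$ exactly.
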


\begin{proof}
	By substituting $O_b=\sum_{a,x}D(b|a,x)p(x)O_{a|x}$ into the denominator of Eq.~\ref{eq:max_adv}, we have 
	\begin{align}
		\max_{O_b\in F_n, p(x)} P_\text{guess}(\mathcal{E},\mathcal{O}) &=\max_{O_b\in F_n, p(x)}\sum_b\tr\left(\tilde{\rho}_b\sum_{a,x}D(b|a,x)p(x)O_{a|x}\right) \nonumber \\
		&=\max_{O_b\in F_n, p(x)} \sum_{a,x} p(x)\tr(\tilde{\rho}_{x_a}O_{a|x}) \nonumber\\
		&=\max_{O_b\in F_n, p(x)} \sum_x p(x) q_x P_\text{guess}(\hat{\mathcal{E}}_x,O_x)\nonumber\\
		&=\max_{p(x)} \sum_x p(x)q_x P_\text{guess}(\hat{\mathcal{E}}_x)\nonumber\\
		&\ge \frac{C^{n-1}_{m-1}}{C^n_m}\sum_x q'_x P_\text{guess}(\hat{\mathcal{E}}_x)\nonumber\\
		&=\frac{n}{m}\sum_x q'_x P_\text{guess}(\hat{\mathcal{E}}_x)
	\end{align}
	where $q_x=\sum_{b\in x}\tr(\tilde{\rho}_b)=\sum_a\tr(\tilde{\rho}_{x_a})$, and it leaves $\hat{\mathcal{E}}_x=\{\tilde{\rho}_{x_a}/q_x\}$ a valid ensemble, with the prior probabilities absorbed. The inequality is attained by letting $p(x)=1/|x|=1/C^n_m$ and $C^{n-1}_{m-1}$ is the factor which normalizes $\{p(x)q_x\}$ to a valid probability distribution, $\{q'_x\}$.
	The term $\sum_x q'_x P_\text{guess}(\hat{\mathcal{E}}_x)$ can be interpreted as the score of a state discrimination game with pre-measurement information. To be more specific, the pre-measurement information tells from which $m$-state sub-ensemble the state is chosen. In this case, the measurements can be optimized according to each sub-ensemble. Apparently, this information cannot decrease the probability of guessing correctly over the whole ensemble. Hence it holds that 
	\begin{align}
		\max_{O_b\in F_n, p(x)}P_\text{guess}(\mathcal{E},\mathcal{O})\ge \frac{n}{m}\max_\mathcal{M}P_\text{guess}(\mathcal{E},\mathcal{M}).
	\end{align}
	Note that this inequality holds for any $\mathcal{E}$, we have
	\begin{align}\label{eq:upperbound}
		\max_\mathcal{E}\frac{\max_{\mathcal{M}}P_\text{guess}(\mathcal{E},\mathcal{M})}
		{\max_{O_b\in F_n} P_\text{guess}(\mathcal{E},\mathcal{O})}\leq \frac{m}{n}.
	\end{align}
	When $d\ge m$, letting $\mathcal{E}$ be $m$ uniformly distributed orthogonal states can lead to equality of Eq.~\ref{eq:upperbound}. In this case, optimal $\mathcal{M}$ is the projectors onto the corresponding orthogonal states, optimal $\mathcal{O}$ is the combination of $C^n_m$ measurements with $n$ projectors as elements corresponding to different combinations of choosing $n$ from $m$. \qed
\end{proof}

The above proposition shows that increasing the dimension can no longer augment the largest advantage when the size of the ensemble is larger than the dimension. Besides dimension, larger $m$ can bring larger greatest advantage because $P_m\subseteq P_{m'}$ if $m'> m$. However the increase stops after $m'=d^2$. This is because an extremal measurement on Hilbert space of dimension $d$ can have no more than $d^2$ outcomes \cite{dariano_classical_2005}, which means that any measurements can be simulated by $d^2$-outcome measurements. And since the simulation is a free operation, it does not generate any advantage.

\section{Generalization}
Here we make a generalization regarding tasks that can reveal the advantage of resourceful measurements. We show that not only the state discrimination experiment, but a certain kind of quantum prepare-and-measure experiment can reveal the advantage of the resourceful measurements. Using similar proving skill in Ref.~\cite{oszmaniec_operational_2019}, we have the following theorem.
\begin{proposition}\label{prop:generalized}
	Suppose the score of a quantum prepare-and-measure experiment is given by $
	S=\sum_{x,y,b} c_{x,y,b}p(x)p(b|x,y)
	$,
	where $c_{x,y,b}$ are real coefficients. If $F$ is a convex set of measurement assemblages, and if the linear mapping $f: \{M_{b|y}\}\mapsto\{N_x\}$ characterized by $\{c_{x,y,b}\}$ is a bijection, then for any measurement assemblage $\{M_{b|y}\}\notin F$, there exists an instance of the above experiment where $\{M_{b|y}\}$ strictly outperforms all the members in $F$.
\end{proposition}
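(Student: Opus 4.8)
The plan is to run a separating-hyperplane argument in the space of measurement assemblages and then convert the resulting witness into a genuine preparation ensemble, using the bijectivity of $f$ together with the normalization of POVMs. First I would rewrite the score so as to isolate the role of the preparations. Writing $\sigma_x = p(x)\rho_x$ for the subnormalized prepared states and using $p(b|x,y)=\tr(\rho_x M_{b|y})$, the score becomes $S=\sum_x \tr(\sigma_x N_x)=\langle f^*(\{\sigma_x\}),\{M_{b|y}\}\rangle$, where $N_x=\sum_{y,b}c_{x,y,b}M_{b|y}$ and $f^*$ is the adjoint of $f$, explicitly $f^*(\{\sigma_x\})_{b|y}=\sum_x c_{x,y,b}\sigma_x$. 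Thus choosing an instance (the preparations) is the same as choosing a witness assemblage $W_{b|y}=\sum_x c_{x,y,b}\sigma_x$, and the claim that $\{M_{b|y}\}$ strictly beats every member of $F$ is precisely the claim that $\{W_{b|y}\}$ strictly separates $\{M_{b|y}\}$ from $F$. Since $f$ is a bijection, $f^*$ is surjective, so every Hermitian assemblage $\{X_{b|y}\}$ is realizable as $f^*(\{\sigma_x\})$ for some Hermitian $\{\sigma_x\}$; this is exactly the structural role played by the bijection hypothesis.

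Second I would produce the separating witness. Because $F$ is convex and closed (as the relevant free sets are) and $\{M_{b|y}\}\notin F$, the Hahn--Banach separation theorem yields a Hermitian assemblage $\{X_{b|y}\}$ and a real $c$ with $\sum_{b,y}\tr(X_{b|y}M_{b|y})>c\geq\sum_{b,y}\tr(X_{b|y}O_{b|y})$ for all $\{O_{b|y}\}\in F$. Pulling this back through the adjoint gives Hermitian candidate preparations $\{\sigma_x\}=(f^*)^{-1}(\{X_{b|y}\})$ that already reproduce the correct strictly positive score gap of $\{M_{b|y}\}$ against every free assemblage.

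Third, and this is the only genuine obstacle, I must upgrade these Hermitian operators to bona fide (subnormalized) states. The device is a gauge freedom: replacing $X_{b|y}$ by $X_{b|y}+W_y$ for any outcome-independent Hermitian $W_y$ changes the score of every \emph{valid} assemblage by the same constant $\sum_y\tr(W_y)$, since $\sum_b M_{b|y}=\mathbb{I}=\sum_b O_{b|y}$; hence it leaves all score differences, and therefore the strict separation, untouched. Exactly as in the identity-shift step for ordinary state discrimination, I would use this freedom, transported through $(f^*)^{-1}$, to push $\{\sigma_x\}$ into the positive semidefinite cone so that each $\sigma_x\geq0$, and then renormalize by $\sum_x\tr(\sigma_x)$, reading off $p(x)=\tr(\sigma_x)$ and $\rho_x=\sigma_x/p(x)$, which preserves strictness. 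The hard part will be verifying that the gauge directions surviving transport through $f^*$ are rich enough to reach the interior of the positive cone; this is where bijectivity of $f$ and the POVM normalization must be combined carefully, and it is the step I would scrutinize most before declaring the proof complete.
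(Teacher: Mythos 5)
Your reduction of the score to $S=\sum_x\tr(\sigma_x N_x)$ and your identification of instances with witness assemblages via $f^*$ are correct, but the proof is incomplete, and the step you flag at the end is a genuine gap rather than a technicality. The gauge directions available in measurement space are the shifts $X_{b|y}\mapsto X_{b|y}+W_y$, constant in $b$; since bijectivity of $f$ forces the coefficient array $c_{x,(y,b)}$ to be square and invertible, these pull back through $(f^*)^{-1}$ to a fixed subspace of $x$-indexed tuples whose dimension is (number of settings $y$) times $\dim(\mathrm{Herm})$, whereas positivizing requires adding an independently chosen, sufficiently large positive operator to each of the (number of $x$, equal to the number of pairs $(y,b)$) components $\sigma_x$. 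Nothing in the bijectivity of $f$ or in POVM normalization guarantees that this pulled-back subspace contains a tuple $\{\tau_x\}$ with every $\tau_x>0$, and for generic invertible $c$ it will not: the constant-in-$b$ tuples span too few directions to control each $\sigma_x$ separately. So the plan stalls exactly where you predicted, and no amount of care at that step rescues the pull-back route in general.

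The paper sidesteps this by doing the separation on the image side rather than pulling back: it applies $f$ first, notes that $f(\mathcal{M})\notin f(F)$ (for which injectivity alone suffices --- the surjectivity you invoke to invert $f^*$ is never used) and that $f(F)$ is convex, separates with witnesses $\{W_x\}$ indexed directly by the preparation label, then shifts $W_x\mapsto W_x+|\lambda|\mathbb{I}$ to make the witnesses positive semidefinite and reads them off, after normalization, as the subnormalized states $p(x)\rho_x$; the identity $\sum_x\tr(\tilde{W}_x N_x)=S(\mathcal{E},\mathcal{M})$ is then definitional. Adopting this order of operations would repair your argument. One caveat worth noting: the identity shift in the image space changes the two sides of the separation by $|\lambda|\sum_x\tr(N_x)$ and $|\lambda|\sum_x\tr(N'_x)$ respectively, and $\sum_x\tr(N_x)=\sum_{x,y,b}c_{x,y,b}\tr(M_{b|y})$ is assemblage-independent only for special coefficient structures (such as the state-discrimination case $c_{x,y,b}=\delta_{b,a}\delta_{w,y}$); the paper passes over this silently, whereas your $M$-space gauge $X_{b|y}\mapsto X_{b|y}+W_y$ is precisely the one that manifestly shifts all valid scores by the same constant $\sum_y\tr(W_y)$. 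Your instinct about where the danger lies was therefore sound; the fix is to trade your pull-back for the paper's push-forward.
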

\begin{proof}
	Let $\mathcal{M}$ be any resourceful measurement assemblage, i.e. $\mathcal{M}\notin F$. Its image $\mathcal{N}=f(\mathcal{M})$ must be outside $f(F)$. The reason is that if $\mathcal{N}\in f(F)$, it must have an image $\mathcal{M}'$ in $F$, which contradicts with the premise that $f$ is a bijection. Since $f$ as a linear map is convexity preserving, $f(F)$ is a convex set. It follows from the \textit{separating hyperplane theorem} that there exists a hyperplane described by $\{W_i\}$ such that $\sum\tr(W_iN'_i)< 0$ for all $\mathcal{N}'\in f(F)$ and $\sum\tr(W_iN_i)\ge 0$ for all $\mathcal{N} \notin f(F)$. Define $\tilde{W}_i=W_i+|\lambda |\mathbb{I}$, where $\lambda$ is the smallest eigenvalue of $W_i$. We have $\sum\tr(\tilde{W}_xN'_x)<\sum\tr(\tilde{W}_xN_x)$. Letting $\mathcal{E}=\{p(x)\rho_x=\tilde{W}/\sum\tr(\tilde{W}_x)\}$, one can find 
	\begin{align}
		\sum\tr(\tilde{W}_xN_x) = \sum_{x,y,b} c_{x,y,b}p(x) \tr(\rho_x M_{b|y})=S(\mathcal{E},\mathcal{M}),
	\end{align}
	and similarly $\sum\tr(\tilde{W}_xN'_x)=S(\mathcal{E},\mathcal{M'})$.
	We can see that $\mathcal{E}$ gives an instance which reveals the advantage of the resource measurement assemblage.  \qed
\end{proof}

The case $x=(w,a)$ and $c_{x,y,b}=\delta_{b,a}\delta_{w,y}$ corresponds to a state discrimination task with prior-measurement information, which has been so far widely used to discuss the outperformance of resourceful measurements \cite{uola_quantifying_2019,skrzypczyk_all_2019,buscemi_complete_2020,carmeli_quantum_2019}.

The maximal relative advantage of the resource can be related to the robustness of $N_x$ with respect to $f(F)$ as
\begin{align} \label{eq:gen_max_adv}
	\max_\mathcal{E} \frac{S(\mathcal{E},\mathcal{M})}
	{\max_{O_b\in F_n} S(\mathcal{E},\mathcal{O})}=1+R_{f(F)}(\mathcal{N}).
\end{align}

Proposition~\ref{prop:generalized} can give results on advantages if the score has the form $S=\sum_{x,y,b} c_{x,y,b}p(x)p(b|x,y)$. Here we make some discussions on scores defined more generally. 
If $S$ as a function of probabilities is nonlinear, the proof can no longer work because of the use of seperating hyperplane theorem.
But note that $f$ can be generalized to any linear map that has 
$f(A\setminus B)\cap f(B)=\emptyset$ given that $B\subset A$, and non necessary bijections. In other words, the map $f$ preserve the membership relation in their images as originally. It would be interested to find a method to certify this property of $f$ under certain resource.

\section{Conclusion}
In this paper, we investigated number of measurement outcomes as a resource. We characterized the robustness via SDP and derived the dual problem of robustness to confirm the operational meaning of robustness as advantage in state discrimination. We gave un upper bound of the maximal advantage. It would be interesting to have a tighter bound for the case $d<m$. At last, we have shown that a broaden kind of prepare-and-measure experiment can be used to demonstrate the advantage of the resource. The duality just gives us one way to interpret operationally a measure of the resource, more possibilities are to be found on this.

%\begin{acknowledgements}
%If you'd like to thank anyone, place your comments here
%and remove the percent signs.
%\end{acknowledgements}

% Authors must disclose all relationships or interests that 
% could have direct or potential influence or impart bias on 
% the work: 
%
\section*{Conflict of interest}
The authors declare that they have no conflict of interest.

% BibTeX users please use one of
%\bibliographystyle{spbasic}      % basic style, author-year citations
%\bibliographystyle{spmpsci}      % mathematics and physical sciences
\bibliographystyle{spphys}       % APS-like style for physics
\bibliography{reference}   % name your BibTeX data base

% Non-BibTeX users please use
%\begin{thebibliography}{}
%
% and use \bibitem to create references. Consult the Instructions
% for authors for reference list style.
%
%\bibitem{RefJ}
%% Format for Journal Reference
%Author, Article title, Journal, Volume, page numbers (year)
%% Format for books
%\bibitem{RefB}
%Author, Book title, page numbers. Publisher, place (year)
%% etc
%\end{thebibliography}

\end{document}